\def\@rmrk#1#2{\refstepcounter
    {#1}\@ifnextchar[{\@yrmrk{#1}{#2}}{\@xrmrk{#1}{#2}}}
\makeatletter\@addtoreset{equation}{section}\makeatother
\newfont{\bfit}{cmbxti10 scaled 2000}
\newfont{\biggi}{cmr12 scaled 2000}
\newtheorem{step}{STEP}
\newcommand{\bes}{\begin{step}}
\newcommand{\es}{\end{step}}
 \newcommand{\eps}{\varepsilon}
 \newcommand{\essinf}{{\rm essinf}\,}
 \newcommand{\R}{\mathbb{R}}
 \newcommand{\Z}{\mathbb{Z}}
 \newcommand{\N}{\mathbb{N}}
 \newcommand{\prob}{\mathbb{P}}
 \newcommand{\me}{\mathbb{E}}
 \renewcommand{\P}{\mathbb{P}}
 \newcommand{\one}{\1}
 \newcommand{\skrib}{{\mathcal B}}
 \newcommand{\skric}{{\mathcal C}}
 \newcommand{\skrid}{{\mathcal D}}
 \newcommand{\skrif}{{\mathcal F}}
 \newcommand{\skrig}{{\mathcal G}}
 \newcommand{\skrik}{{\mathcal K}}
 \newcommand{\skril}{{\mathcal L}}
 \newcommand{\skrim}{{\mathcal M}}
 \newcommand{\skrin}{{\mathcal N}}
 \newcommand{\skrix}{{\mathcal X}}
 \newcommand{\heap}[2]{\genfrac{}{}{0pt}{}{#1}{#2}}
 \newcommand{\sfrac}[2]{\mbox{$\frac{#1}{#2}$}}
\def\1{{\mathchoice {1\mskip-4mu\mathrm l}      
{1\mskip-4mu\mathrm l}
{1\mskip-4.5mu\mathrm l} {1\mskip-5mu\mathrm l}}}
\newcommand{\eq}{\begin{equation}}
\newcommand{\en}{\end{equation}}
\newenvironment{Proof}[1]
{\vskip0.1cm\noindent{\bf #1}{\hspace*{0.3cm}}}{\vspace{0.15cm}}
\renewcommand{\subsection}{\secdef \subsct\sbsect}
\newcommand{\subsct}[2][default]{\refstepcounter{subsection}
\vspace{0.15cm}
{\flushleft\bf \arabic{section}.\arabic{subsection}~\bf #1  }
\nopagebreak\nopagebreak}
\newcommand{\sbsect}[1]{\vspace{0.1cm}\noindent
{\bf #1}\vspace{0.1cm}}
\newtheorem{theorem}{Theorem}[section]
\newtheorem{lemma}[theorem]{Lemma}
\newtheoremstyle{thm}{1.5ex}{1.5ex}{\itshape\rmfamily}{}
{\bfseries\rmfamily}{}{2ex}{}
\newtheoremstyle{rem}{1.3ex}{1.3ex}{\rmfamily}{}
{\itshape\rmfamily}{}{1.5ex}{}
\theoremstyle{rem}
\def\thebibliography#1{\section*{reference}
  \list%
  {\arabic{enumi}.}
    {\settowidth\labelwidth{[#1]}\leftmargin\labelwidth
    \advance\leftmargin\labelsep
    \parsep0pt\itemsep0pt
    \usecounter{enumi}}
    \def\newblock{\hskip .11em plus .33em minus .07em}
    \sloppy                   
    \sfcode`\.=1000\relax}
\begin{document}
\title[Lossy  version  of  AEP for Networked Data Structures]
{\Large Lossy  Asymptotic Equipartition property for Networked Data Structures}

\author[Kwabena Doku-Amponsah]{}

\maketitle
\thispagestyle{empty}
\vspace{-0.5cm}

\centerline{\sc{By Kwabena Doku-Amponsah}}
\renewcommand{\thefootnote}{}
\footnote{\textit{Mathematics Subject Classification :} 94A15,
 94A24, 60F10, 05C80} \footnote{\textit{Keywords: } Asymptotic equipartition Property, rate-distortion theory, process-level large  deviation principle,
 relative  entropy, Random  Network, Metabolic network.}
\renewcommand{\thefootnote}{1}
\renewcommand{\thefootnote}{}
\footnote{\textit{Address:} Statistics Department, University of
Ghana, Box LG 115, Legon,Ghana.\,
\textit{E-mail:\,kdoku@ug.edu.gh}.}
\renewcommand{\thefootnote}{1}
\centerline{\textit{University of Ghana}}

\begin{quote}{\small }{\bf Abstract.} In  this  article  we  prove  a  Generalized  Asypmtotic  Equipartition  Property
for  Networked  Data  Structures  modelled  as  coloured  random  graphs. The  main  techniques  in  this  article  remains
 large  deviation principles  for suitably defined empirical  measures  on  coloured  random  graphs.
 We  apply  our  main  result    to  a  concrete  example  from  the  field of  Biology.

\end{quote}\vspace{0.5cm}

\section{Introduction}

Suppose   we  have  a  networked  data  structure  $x=\big\{(x(u), x(v)):\, uv\in e\big\}$ generated by a memoryless source $\skrig$ with distribution $\P^{(x)}$ is to be compressed with distortion no greater than $d \ge 0,$ using a memoryless random codebook $\hat{\skrig}$ with distribution $\P^{(y)}$. Then  the  compression performance can  be  determined by the "generalized asymptotic equipartition property" (AEP), which states that the probability of finding a $d-$ close match between $x$  and any given networked  data  structure (codeword) $y=\big\{(y(u), y(v)):\, uv\in e\big\}$, is approximately $2^{-nR(\P^{(x)},\, \P^{(y)},\, d)}.$ The rate function $R(\P^{(x)}, \P^{(y)}, d)$ can be expressed as an infimum of relative entropies. The main aim  of this  article  is  to  extend  the  results that have appeared in the recent literature  as  \cite{DA16}  and  the  reference  therein.\\

To  be  specific, in  this  article  we  develop  a Lossy  AEP  for  networked   data  structures  modelled  as  coloured  random  graphs. We   prove  process  large deviation  principle (LDP)   for  the  coloured  random  graph  conditioned to  have  a  given  empirical  colour  measure  and  empirical pair  measure,  see  Doku-Amponsah \cite{DA06}, using  similar  coupling  arguments  as  in  the article by  Boucheron et. al  \cite{BGL02}.  From  this  LDP    and   the techniques  employed  by  Dembo and Kontoyiannis~\cite{DK02} for  the  random field on  $\Z^2,$  we obtain the  proof of  the Lossy AEP for  the Networked  Data  Structures. \\

We  apply  our  Lossy  AEP  to  the  following concrete examples from biology: {\bf Metabolic network;} This is a
graph of interactions forming a part of the energy generation and
biosynthesis metabolism of the bacterium E.coli. Here, the units
represent \emph{substrates} and \emph{products}, and links represent
\emph{interactions}. See Newman \cite{New00}.\\

The  article  is  organized  as  follows. Generalized AEP for Coloured Random  Graph Model  section  contain the  main  result  of  the  paper, Theorem~\ref{AEP1}.  LDP for two-dimensional Coloured Random  Graph Model section  gives  process level LDP's, Theorem~\ref{AEP2}  and \ref{AEP4},   which  form  the  bases of  the  proof  of  the  main  result of  the  article.   Proof of Theorem~\ref{AEP1}, ~\ref{AEP2} and ~\ref{AEP4} section  provides  the  proofs  of  all  Process  Level  LDP's  for  the  paper  and  hence  the  main  result  of  the  article.

\section{Generalized  AEP for  Coloured Random  Graph  Process}
\subsection{Main Result}

Consider  two  Coloured  Random Graph  processes   $X=\big\{(X(u), X(v)):\, uv\in E\big\}$  and    
$Y=\big\{ (Y(u), Y(v)):\, uv\in E\big\}$  which  take  values  in  $\skrig=\skrig(\skrix)$  and  
$\hat{\skrig}=\hat{\skrig}(\skrix),$ resp.,  the  spaces  of  finite graphs on  $\skrix.$  
We  equip  $\skrig(\skrix)$, $\hat{\skrig}(\skrix)$  with  their  Borel $\sigma$ fields $\skrif^{(x)}$  
and  $\hat{\skrif}^{(x)}.$  Let  $\P^{(x)}$  and  $\P^{(y)}$  denote  the  probability measures  of the entire  processes $X$ and  $Y.$ 
 By $\P_{(\sigma,\pi)}^{(x)}$  and  $\P_{(\sigma,\pi)}^{(y)}$   we  denote  the   coloured  random graphs $X$ and  $Y$ conditioned 
 to have empirical colour measure  $\sigma$ and empirical pair measure~$\pi.$ See, example  \cite{DA06}. We always  assume  that  $X$  and  $Y$  are  independent  of  each  other.\\
  
 By  $\skrix$  we  denote a  finite  alphabet  and  denote   by $\skrin(\skrix)$  the  space  of  counting  measure  on   $\skrix$   
 equipped  with  the  discrete topology. By  $\skrim(\skrix)$  we  denote  the  space of  probability  measures  on  $\skrix$  equipped  with  the weak  topology  and  $\skrim_*(\skrix)$  the  space  
 of finite  measures  on  $\skrix$  equipped  with  the  weak  topology.\\   

Throughout the  rest  of  the  article  we  will  assume that  $X$  and  $Y$  are   Coloured Random  Graph processes, See \cite{Pe98}. 
For  $n\ge 1$,  let  $P_n$  denote  the  marginal  distribution  of  $X$  on  $V=\{1,2,3,...,n\}$  taking with  respect  
to  $\P_{(\sigma,\pi)}^{(y)}$ and  $Q_n^{(y)}$  denote  the  marginal distribution  $Y$  on  $V=\{1,2,3,...,n\}$  with  respect  to $\P_{(\sigma,\pi)}^{(y)}.$ \\
 
Let  $\rho:\skrix\times\skrin(\skrix)\times\skrix\times\skrin(\skrix)\to[0,\infty)$  be  
an arbitrary  non-negative  function and  define  a  sequence of  single-letter  distortion  measures  $\rho^{(n)}:\skrig\times\hat{\skrig}\to[0,\infty),$  $n\ge 1$  by
$$\rho^{(n)}(x,y)=\frac{1}{n}\sum_{v\in V}\rho\Big(\skrib_x(v),\,\skrib_y(v)\Big),$$

where $\skrib_x(v)=(x(v), L_x(v))$  and  $\skrib_y(v)=(y(v), L_y(v)).$  Given  $d\ge 0$   and  $x\in\skrig$ ,  we  denote the  distortion-ball  of  radius  $d$  by   
$$B(x,d)=\Big\{y\in\hat{\skrig}:\,\, \rho^{(n)}(x,y)\le d\Big\}.$$
For  $(\sigma,\,\pi)\in \skrim(\skrix)\times\skrim(\skrix\times\skrix),$  we write  $$\skrik_{(\sigma,\pi)}(a,l)=\sigma(a)\prod_{b\in\skrix}\frac{e^{-\pi(a,b)/\sigma(a)}[\pi(a,b)/\sigma(a)]^{\ell(b)}}{\ell(b)!},\,\mbox{for
$\ell\in\skrin(\skrix)$  } $$  and     define  the  rate  function  $I_1:\skrim[{(\skrix\times\skrin(\skrix))}^2]\to [0,\, \infty]$  by

\begin{equation}\label{AEP3}
\begin{aligned}
I_1(\nu)= \left\{ \begin{array}{ll}H\big(\nu\,\|\,\skrik_{(\sigma,\pi)}\otimes\skrik_{(\sigma,\pi)}), & 
\mbox{if $\nu$  is  consistent  and $\nu_{1,1}=\nu_{1,2}=\sigma$,}\\

 \infty & \mbox{otherwise,}

\end{array}\right.
\end{aligned}
\end{equation}

where  $$\,\skrik_{(\sigma,\pi)}\otimes\skrik_{(\sigma,\pi)}\big((a_x,a_y),(l_{x},l_{y})\big)
=\skrik_{(\sigma,\pi)}(a_x,\,l_x)\skrik_{(\sigma,\pi)}(a_y,\,l_y).$$ \\

  By  $x\,\skrid\, p$  we  mean  $x$  has distribution  $p.$  For $(\sigma,\,\pi)\in \skrim(\skrix)\times\skrim(\skrix\times\skrix),$  we  write
 $$d_{av}(\sigma,\pi)=\langle \log \langle e^{t\rho(\skrib_X,\,\skrib_Y)},\skrik_{(\sigma,\pi)}\rangle,\skrik_{(\sigma,\pi)}\rangle.$$  
 Assume $$d_{min}^{(n)}(\sigma,\pi)=\me_{P_{n}^{(x)}}\big[\essinf_{Y\,\skrid\,  Q_n^{(y)}}\rho^{(n)}(X,Y)\big]\,\mbox{ $\to  d_{min}(\sigma,\pi).$}$$  For  $n>1,$   we  write  $$R_n(P_n^{(x)},Q_n^{(y)}, d):=\inf_{V_n}\Big\{\frac{1}{n}H(V_n\,\|\,P_n^{(x)}\times Q_n^{(y)}):\,V_n\in \skrim(\skrig\times\hat{\skrig})\Big\}$$
and $$d_{min}^{\infty}(\sigma,\pi):=\inf\Big\{d\ge 0:\,\sup_{n\ge 1}R_n(P_n^{(x)}, Q_n^{(y)}, d)<\infty\Big\}.$$

\begin{theorem}\label{AEP1}
Suppose  $X$  and  $Y$  are  coloured  random  graph. Assume $\rho$ are  bounded  function.  Then,
\begin{itemize}

\item[(i)] with  $\P^{(x)}-$ probability $1,$ conditional  on  the  event $\big\{\,\Phi(\skril_{n,1})=\Phi(\skril_{n,2})=\sigma,\pi)\big\}$  the  random  variables  $\Big\{ \rho^{(n)}(x,Y)\Big\}$ satisfy  an  LDP   with  deterministic,  convex  rate-function  $$I_{\rho}(z):=\inf_{\nu}\Big\{I_1(\nu):\, \langle\rho, \,\nu\rangle=z\Big\}.$$
\item [(ii)] for  all $d\in\Big(d_{min}(\sigma,\pi),\,d_{av}(\sigma,\pi)\Big)$,  except  possibly  at $d=d_{min}^{\infty}(\sigma,\pi)$
\begin{equation}\label{AEP11}
\lim_{n\to\infty}-\frac{1}{n}\log Q_n^{(x)}\Big(B(X,D)\Big)=R\big(\P_{(\sigma,\pi)}^{(x)},\P_{(\sigma,\pi)}^{(y)},d\big)\,\,\mbox {almost  surely,}\end{equation}
where
$R(p,q,D)=\inf_{\nu}H(\nu\,\|\,p\times q).$
\end{itemize}
\end{theorem}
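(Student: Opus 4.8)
The plan is to derive both parts from the process-level large deviation principles for the joint empirical measure established in Theorems~\ref{AEP2} and~\ref{AEP4}, together with the contraction principle and a rate-distortion variational identification. The overarching strategy follows Dembo and Kontoyiannis~\cite{DK02}: first obtain an LDP for the scalar distortion, then identify its contracted rate function with the rate-distortion function, and finally upgrade the annealed statement to an almost-sure one over the random source.

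For part (i), I would first observe that the single-letter distortion is a \emph{linear} functional of an empirical measure. Writing $\nu_n$ for the empirical measure on ${(\skrix\times\skrin(\skrix))}^2$ associated with the pairs $\big(\skrib_x(v),\skrib_y(v)\big)$, $v\in V$, we have $\rho^{(n)}(x,Y)=\langle\rho,\nu_n\rangle$. By Theorem~\ref{AEP4}, under $\P^{(y)}_{(\sigma,\pi)}$ and conditional on the empirical colour and pair measures being $(\sigma,\pi)$, the sequence $\nu_n$ satisfies an LDP with good, convex rate function $I_1$. Since $\rho$ is bounded, the map $\nu\mapsto\langle\rho,\nu\rangle$ is continuous for the weak topology, so the contraction principle transfers this LDP to $\rho^{(n)}(x,Y)$ with rate function $I_\rho(z)=\inf\{I_1(\nu):\langle\rho,\nu\rangle=z\}$. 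Convexity of $I_\rho$ follows from convexity of $I_1$ (relative entropy is convex) and linearity of the constraint, as the infimal image of a convex function along an affine map is convex. The almost-sure qualifier is handled by noting that $I_1$ depends on the realization $x$ only through its empirical colour and pair measures, which converge $\P^{(x)}$-almost surely to $(\sigma,\pi)$; hence the rate function is deterministic for $\P^{(x)}$-a.e.\ $x$.

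For part (ii), I would start from the identity $Q_n^{(x)}\big(B(X,d)\big)=\P^{(y)}_{(\sigma,\pi)}\big(\rho^{(n)}(X,Y)\le d\mid X\big)$. Applying the LDP from part (i), together with lower semicontinuity and convexity of $I_\rho$, yields
$$-\tfrac1n\log Q_n^{(x)}\big(B(X,d)\big)\longrightarrow \inf_{z\le d}I_\rho(z)=\inf\big\{I_1(\nu):\langle\rho,\nu\rangle\le d\big\}$$
almost surely, for $d$ in the interval $(d_{min}(\sigma,\pi),d_{av}(\sigma,\pi))$ on which $I_\rho$ is finite. The restriction away from $d=d_{min}^{\infty}(\sigma,\pi)$ is precisely the threshold at which $\sup_n R_n$ passes from finite to infinite and $I_\rho$ may fail to be continuous, so there the matching of the LDP upper and lower bounds is not guaranteed. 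The final step is the variational identification
$$\inf\big\{H(\nu\,\|\,\skrik_{(\sigma,\pi)}\otimes\skrik_{(\sigma,\pi)}):\langle\rho,\nu\rangle\le d,\ \nu_{1,1}=\nu_{1,2}=\sigma\big\}=R\big(\P^{(x)}_{(\sigma,\pi)},\P^{(y)}_{(\sigma,\pi)},d\big),$$
re-expressing the contracted rate in terms of $R(p,q,D)=\inf_{\nu}H(\nu\,\|\,p\times q)$. This rests on the fact that the conditioned laws have their single-site behaviour governed by the kernel $\skrik_{(\sigma,\pi)}$, so that the product reference measure $\skrik_{(\sigma,\pi)}\otimes\skrik_{(\sigma,\pi)}$ is exactly the descriptor of $p\times q$ at the level of empirical measures.

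The hard part will be the passage from the annealed LDP, a statement about $\P^{(y)}_{(\sigma,\pi)}(\rho^{(n)}\le d)$, to the quenched, almost-sure statement over the realization of $X$. The lower bound is comparatively routine: for $\P^{(x)}$-typical $x$ the LDP lower bound applies directly, since such $x$ carry the correct limiting empirical measures. The upper bound requires control of the fluctuations of the conditional probability $Q_n^{(x)}(B(X,d))$ over the random choice of $X$; following~\cite{DK02}, this is obtained by a concentration estimate showing these probabilities cluster around their typical exponential rate, after which a Borel--Cantelli argument delivers the almost-sure identification. Excluding the single value $d=d_{min}^{\infty}(\sigma,\pi)$ is what makes this feasible: away from it, $R(\cdot,\cdot,d)$ is continuous in $d$, and the upper and lower exponential rates can be matched.
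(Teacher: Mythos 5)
Your part (i) is essentially the paper's own argument: the paper also writes $\rho^{(n)}(x,Y)=\langle\rho,\skril_n\rangle$, notes that boundedness of $\rho$ makes $\Gamma\mapsto\Gamma_\rho$ preserve open and closed sets, and contracts the process-level LDP down to the scalar distortion. One small correction: the LDP you should contract is Theorem~\ref{AEP2} (for $\skril_n$ on $(\skrix\times\skrin(\skrix))^2$), not Theorem~\ref{AEP4} (which concerns $\tilde{\skril}_n$ and is only the intermediate step used to prove Theorem~\ref{AEP2}).

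For part (ii) your route genuinely diverges from the paper's, and it leaves a gap at the central step. You propose to read off $-\frac1n\log Q_n(B(X,d))\to\inf_{z\le d}I_\rho(z)$ directly from the LDP of part (i) and then to \emph{assert} the variational identification $\inf\{I_1(\nu):\langle\rho,\nu\rangle\le d\}=R\big(\P^{(x)}_{(\sigma,\pi)},\P^{(y)}_{(\sigma,\pi)},d\big)$. That identification is not a formality: $R$ is defined as an infimum of (normalized) relative entropies over joint laws $V_n$ on the full graph spaces $\skrig\times\hat{\skrig}$, whereas $I_1$ is a single-site relative entropy against the product kernel $\skrik_{(\sigma,\pi)}\otimes\skrik_{(\sigma,\pi)}$; equating the two is exactly the single-letterization of the rate-distortion function, and it is the content that the paper supplies by a different mechanism. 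The paper applies Varadhan's lemma to the LDP of part (i) to obtain the limiting scaled cumulant generating function $\Lambda_\infty(t)$, uses convex duality to get $R=\sup_t[td-\Lambda_\infty(t)]=\Lambda_\infty^*(d)$ and $R_n=\Lambda_n^*$, identifies $\lim_n\Lambda_n=\Lambda_\infty$ by bounded convergence together with the law of large numbers of Lemma~\ref{WLLN} (which also produces the thresholds $d_{av}$ and $d_{min}$), and then transfers $\Lambda_n^*\to\Lambda_\infty^*$ via uniform convergence on compacts and the epi-convergence argument of \cite[Page~41]{DK02}; the exclusion of $d=d_{min}^\infty$ arises there, not from continuity of $I_\rho$ as you suggest. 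Your acknowledgement that the quenched upper bound needs a concentration plus Borel--Cantelli argument in the style of \cite{DK02} is consistent with what the paper does (it likewise defers to \cite{DK02}), but without an actual proof of the variational identity --- for instance by running the same Varadhan/convex-duality computation you would otherwise be bypassing --- part (ii) of your proposal is not complete.
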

\subsection{Application~~\cite{DA10}}

{\bf Metabolic network.}  We consider a metabolic network  of the
energy and biosynthesis metabolism of the bacterium E.coli  modelled  as  coloured random  graph
  on  $n$ nodes partition  into   $n\sigma_n(substrate)$  block of  substrates  and  $n\sigma_n(product)$  block of
   products,  and $n\|\pi_n\|$ number of  interactions  divided into
   $n\pi_n(substrate,\, product),$  $n\pi_n(substrate,\, product),$ $ n\pi_n(substrate,\, substrate)/2,$ $ n\pi_n(product,\,product)/2$
   different interactions, respectively. Assume $\sigma_n$  converges $\sigma$  and  $\pi_n$ converges  $\pi.$   If  we  take  $\rho(s,r)=(s-r)^2$
        then, by  Theorem~\ref{AEP1}  we  have   the  distortion-rate

 \begin{equation}\label{AEP11}
\begin{aligned}
R(P,Q,D)=\left\{ \begin{array}{ll} 0, & \mbox{ if \,$D\ge 2\pi(subs,\, prod)+\pi(subs,\, subs)+\pi(prod,\,prod)+2\pi(subs,\, prod)$,}\\

 \infty & \mbox{otherwise.}

\end{array}\right.
\end{aligned}
\end{equation}

where  $subs=substracte$  and  $prod=product.$

\section{LDP  for  two-dimensional  Coloured Random Graph process}
For any $n\in\N$ we define
$$\begin{aligned}
\skrim_n(\skrix) & := \big\{ \sigma\in \skrim(\skrix) \, : \, n\sigma(a) \in \N \mbox{ for all } a\in\skrix\big\},\\
\tilde \skrim_n(\skrix \times \skrix) & := \big\{ \pi\in \tilde\skrim_*(\skrix\times\skrix)
\, : \, \sfrac n{1+\one\{a=b\}}\,\pi(a,b) \in \N  \mbox{ for all } a,b\in\skrix \big\}.
\end{aligned}$$
Throughout the proof we may assume that $\omega_n(a_x,\, a_y)>0,$   for all
$a_x,a_y\in\skrix$  and $\omega_{n,1}(a_x)=\sigma_{n}(a_x),$ $\omega_{n,2}(a_y)=\sigma_{n}(a_y)$. It is easy to see that the law of the two-dimensional
coloured graph conditioned to have empirical colour measure
$\sigma_n$ and empirical pair measure~$\pi_n$,
$$\prob_{(\sigma_n,\pi_n)}:=\prob\{ \,\cdot\,  \,|\,\Phi(\skril_{n,1})=(\omega_{n,1},\pi_n), \Phi(\skril_{n,2})=(\omega_{n,2},\pi_n)\},\\[2mm]$$
can be described in the following manner:
\begin{itemize}
\item Assign colours to the vertices by sampling without replacement from the collection
of $n$~colours, which contains any colour $(a_x,a_y)\in\skrix$ exactly $n\omega_n(a_x,\,a_y)$ times;
\item for every unordered pair $\{a,b\}$ of colours create exactly $m_n(a,b)$ edges by sampling
without replacement from the pool of possible edges connecting vertices of colour $a$ and $b$,
where
\begin{equation}\label{nabdef}
m_n(a,b):=\left\{ \begin{array}{ll} n\, \pi_n(a,\,b) & \mbox{if } a=a_x, \, b=b_x \,\mbox{and}\, a_x\not=b_x \,\\
 n\, \pi_n(a,\,b) & \mbox{if }  a=a_y, \, b=b_y \,\mbox{and}\,  a_y\not=b_y \,\\
\frac n2\, \pi_n(a,\,b) & \mbox{if } a=a_x, \, b=b_x \,\mbox{and}\, a_x=b_x \,\\
\frac n2\, \pi_n(a ,b) & \mbox{if }  a=a_y, \, b=b_y \,\mbox{and}\,  a_y=b_y
.\end{array}\right.
\end{equation}
\end{itemize}
We  define  the  process-level  empirical  measure  $\skril_n$  induced  by  $X$  and  $Y$  on  $\skrig\times\hat{\skrig}$   by
$$\skril_n(\beta_x,\beta_y)=\frac{1}{n}\sum_{v\in V}\delta_{\big(\skrib_X(v),\,\skrib_Y(v)\big)}(\beta_x,\beta_y), \, \mbox{ for $(\beta_x,\beta_y)\in\skrim[{(\skrix\times\skrix_k^*)}^2].$ }$$
Note  that  we  have
$$\begin{aligned}
\skril_n\otimes\phi^{-1}\big((x(v),y(v)),\, l_{x,y}(v)\big)&=\frac{1}{n}\sum_{v\in V}\delta_{\big(\skrib_X(v),\,\skrib_Y(v)\big)}\big(\phi^{-1}(x(v),y(v)),\, l_{x,y}(v)\big)\\
&=\frac{1}{n}\sum_{v\in V}\delta_{\big((X(v),Y(v)),\, L_{X,Y}(v)\big)}\big((x(v),y(v)),\, l_{x,y}(v)\big)\\
&:=\tilde{\skril}_n\big((x(v),y(v)),\, l_{x,y}(v)\big),
\end{aligned}$$

where  $\phi(\beta_x,\beta_y)=\big((x(v),y(v)),\, l_{x,y}(v)\big).$

The  next  Theorem  which is  the  LDP for  $\skril_n$  of  the  process  $X,Y$  is  the  main ingredient  in  the  proof  of  the  Lossy  AEP.
\begin{theorem}\label{AEP2}
The  sequence  of empirical  measures  $\skril_n$ satisfies a  large  deviation  principle  in  the  space  of  probability  measures  on   $(\skrix\times\skrin(\skrix))^2$  equipped  with  the  topology of  weak  convergence,  with  convex,  good rate-function  $I_1.$

\end{theorem}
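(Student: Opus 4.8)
The plan is to deduce the joint large deviation principle from the single-graph process-level LDP of Doku-Amponsah \cite{DA06}, using the independence of $X$ and $Y$ to tensorize. The guiding heuristic is that, conditioned on $\Phi(\skril_{n,1})=\Phi(\skril_{n,2})=(\sigma_n,\pi_n)$, the local marks $\skrib_X(v)=(X(v),L_X(v))$ become asymptotically independent and identically distributed according to the product-of-Poisson law $\skrik_{(\sigma,\pi)}$ appearing in \eqref{AEP3}. Since $X$ and $Y$ are independent, the joint marks $(\skrib_X(v),\skrib_Y(v))$ are then asymptotically i.i.d.\ from the product measure $\skrik_{(\sigma,\pi)}\otimes\skrik_{(\sigma,\pi)}$, so that a Sanov-type argument produces the relative-entropy rate function $I_1$.

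First I would invoke the single-process result of \cite{DA06}: conditioned on $\{\Phi(\skril_{n,1})=(\sigma_n,\pi_n)\}$, the empirical mark measure $\skril_{n,1}$ satisfies an LDP on $\skrim(\skrix\times\skrin(\skrix))$ with good convex rate function $\mu\mapsto H(\mu\,\|\,\skrik_{(\sigma,\pi)})$, restricted to consistent $\mu$ with $\mu_1=\sigma$. Second, I would lift this to the joint empirical measure: because the conditioning decouples across the two independent processes, the pairs $(\skrib_X(v),\skrib_Y(v))$ inherit a product structure, and $\skril_n$ satisfies an LDP with rate $H(\cdot\,\|\,\skrik_{(\sigma,\pi)}\otimes\skrik_{(\sigma,\pi)})$, the marginal constraints $\nu_{1,1}=\nu_{1,2}=\sigma$ being forced by conditioning each marginal separately. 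Third, I would identify this tensorized relative entropy with $I_1$ of \eqref{AEP3}, verify that the consistency and marginal requirements coincide, and confirm goodness and convexity: convexity is immediate from the joint convexity of relative entropy, while goodness follows since $\skrik_{(\sigma,\pi)}\otimes\skrik_{(\sigma,\pi)}$ is a probability measure on the Polish space $(\skrix\times\skrin(\skrix))^2$, so that $H(\cdot\,\|\,\skrik_{(\sigma,\pi)}\otimes\skrik_{(\sigma,\pi)})$ has compact sublevel sets.

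The hard part will be passing from the genuinely dependent conditioned model to the i.i.d.\ approximation. The marks $\skrib_X(v)$ are correlated because every edge is shared by two endpoints and the conditioning on $\pi_n$ links neighbourhoods across vertices; moreover the sampling-without-replacement edge construction of \eqref{nabdef} departs from independent placement. To overcome this I would follow the coupling scheme of Boucheron et al.\ \cite{BGL02}, constructing an auxiliary Poissonian model with independent neighbourhood marks whose empirical measure is exponentially equivalent to $\skril_n$. The delicate verification is that the combinatorial cost of sampling without replacement matches, on the exponential scale, the entropy of independent Poisson placement; this is obtained by Stirling's approximation applied to the multinomial and edge-counting factors in \eqref{nabdef}. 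Once exponential equivalence is secured the LDP transfers to $\skril_n$, the exponential tightness needed for a good rate function on the non-compact mark space being supplied by the exponential tails of the Poisson counts in $\skrik_{(\sigma,\pi)}$; the theorem follows.
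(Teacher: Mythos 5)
Your proposal follows essentially the same route as the paper: the paper likewise couples the conditioned coloured graph to a random allocation model with asymptotically independent Poissonian neighbourhood marks (Lemma~\ref{randomg.expequivalnce}, after Boucheron et al.), obtains the LDP for the joint occupancy measure by a method-of-types/Stirling computation (Lemma~\ref{randomg.LDprobm}), and transfers it to $\skril_n$ --- the only cosmetic difference being that the paper routes through $\tilde{\skril}_n$ and the coordinate-rearranging map $\phi$ rather than working with $\skril_n$ directly. One caution: your intermediate step of ``lifting'' the two marginal LDPs of \cite{DA06} by independence is not a valid deduction on its own (a joint LDP is never a formal consequence of its two marginal LDPs), but your closing coupling-plus-Sanov argument for the joint auxiliary model supersedes that step and is exactly what the paper's proof supplies.
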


The  proof  of  Theorem\ref{AEP2}  above   is dependent  on  the LDP  for  $\tilde{\skril}_n$  given  below:

\begin{theorem}\label{AEP4}
The  sequence  of empirical  measures  $\tilde{\skril}_n$ satisfies a  large  deviation  principle  in  the  space  of  probability  measures  on   $\skrix^2\times\skrin{(\skrix)}^2$  equipped  with  the  topology of  weak  convergence,  with  convex,  good rate-function

\begin{equation}\label{AEP5}
\begin{aligned}
I_2(\omega)= \left\{ \begin{array}{ll}H\big(\omega\,\|\,\skrik_{(\sigma,\pi)}\otimes\skrik_{(\sigma,\pi)}), & \mbox{if
$\omega$  is  consistent and   $\omega_{1,1}=\omega_{1,2}=\sigma,$}\\
\infty & \mbox{otherwise,}

\end{array}\right.
\end{aligned}
\end{equation}

where  $\,\skrik_{(\sigma,\pi)}\otimes\skrik_{(\sigma,\pi)}\big((a_x,a_y),(l_{x},l_{y})\big)
=\skrik_{(\sigma,\pi)}(a_x,\,l_x)\skrik_{(\sigma,\pi)}(a_y,\,l_y).$
\end{theorem}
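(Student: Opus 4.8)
The plan is to realize the pair process $(X,Y)$ as a single coloured random graph on the doubled colour space $\skrix\times\skrix$ and then run the coupling argument of Doku-Amponsah~\cite{DA06}, using the independence of $X$ and $Y$ to produce the product reference measure $\skrik_{(\sigma,\pi)}\otimes\skrik_{(\sigma,\pi)}$. First I would fix the description of the conditioned law: under $\prob_{(\sigma_n,\pi_n)}$ the colours $(X(v),Y(v))$ are obtained by sampling without replacement from a pool containing the pair $(a_x,a_y)$ exactly $n\omega_n(a_x,a_y)$ times, while the two edge sets are placed independently by sampling without replacement according to the occupation numbers $m_n$ of~\eqref{nabdef}. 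The empirical measure $\tilde\skril_n$ then records, for each vertex $v$, the colour pair together with the pair of colour-indexed degree vectors $L_{X,Y}(v)=(L_X(v),L_Y(v))$, so that $\tilde\skril_n$ lives on $\skrix^2\times\skrin(\skrix)^2$ as required.

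The core step is to replace this exact without-replacement model by an independent reference model and to control the replacement cost on the exponential scale. In the reference model each vertex independently draws its colour–degree data from $\skrik_{(\sigma,\pi)}\otimes\skrik_{(\sigma,\pi)}$: the colour marginals are $\sigma\otimes\sigma$, and conditionally on the colours the two degree vectors are independent, each with the Poisson law prescribed by $\skrik_{(\sigma,\pi)}$. Under this model $\tilde\skril_n$ is the empirical measure of $n$ i.i.d.\ samples, so Sanov's theorem yields an LDP in the weak topology with good, convex rate function $\omega\mapsto H(\omega\,\|\,\skrik_{(\sigma,\pi)}\otimes\skrik_{(\sigma,\pi)})$. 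I would then show, by Stirling estimates on the multinomial colour-sampling weights and on the edge-pool occupation numbers, that the Radon--Nikodym derivative between the conditioned law $\prob_{(\sigma_n,\pi_n)}$ and the reference law is $e^{o(n)}$ uniformly over configurations whose empirical measure stays in a fixed weak neighbourhood of a consistent $\omega$; this makes the two models exponentially equivalent and transfers the Sanov LDP to $\tilde\skril_n$ under $\prob_{(\sigma_n,\pi_n)}$.

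The constraints encoded in $I_2$ fall out of this comparison. Because every degree vector is generated by the same edge pool that is forced to have total occupation $m_n$, any admissible limit $\omega$ must satisfy the consistency relation (its degree marginals are compatible with $\pi$) and must have both colour marginals equal to $\sigma$; configurations violating $\omega_{1,1}=\omega_{1,2}=\sigma$ or consistency have $\prob_{(\sigma_n,\pi_n)}$-probability that in fact vanishes for each finite $n$, forcing $I_2=\infty$ there. Convexity and goodness of $I_2$ are inherited from the relative entropy: $H(\,\cdot\,\|\,\skrik_{(\sigma,\pi)}\otimes\skrik_{(\sigma,\pi)})$ is convex and lower semicontinuous in the weak topology and has weakly compact sublevel sets, the exponential tightness of the degree coordinate being guaranteed by the finite exponential moments of the Poisson reference together with the finiteness of $\skrix$.

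I expect the main obstacle to be the uniform $e^{o(n)}$ control of the change of measure in the second paragraph: the edges are drawn without replacement from finite pools whose sizes couple the colour classes through the shared occupation numbers $m_n(a,b)$, so bounding the ratio to the independent Poisson model requires a careful treatment of these combinatorial factors and a truncation of atypically large degrees, precisely as in the coupling of Boucheron et al.~\cite{BGL02}. Once this exponential equivalence is established, the identification of the rate function and the verification of its convexity and goodness are routine.
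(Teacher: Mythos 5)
Your overall architecture is close to the paper's: both arguments compare the conditioned graph to the product Poissonian reference $\skrik_{(\sigma,\pi)}\otimes\skrik_{(\sigma,\pi)}$, both extract convexity and goodness from relative entropy, and both lean on the Boucheron--Gamboa--Leonard coupling \cite{BGL02} to tame the without-replacement edge placement. But the transfer mechanism you propose is not the one the paper uses, and as stated it has a gap. The paper splits the work into two lemmas: Lemma~\ref{randomg.expequivalnce} couples the coloured graph to a bins-and-balls random allocation model and uses Bennett's inequality to show the two empirical measures differ by less than $\eps$ in total variation outside a superexponentially small event (exponential equivalence in the sense of \cite[Definition~4.2.10]{DZ98}); Lemma~\ref{randomg.LDprobm} then computes the probability of each type under the allocation model \emph{exactly} via the combinatorial formula \eqref{Tclass} and Stirling, yielding matching upper and lower bounds $e^{-n(H(\nu_{n,1}\|\skrik_{(\sigma_n,\pi_n)})+H(\nu_{n,2}\|\skrik_{(\sigma_n,\pi_n)})+o(1))}$, from which the LDP follows by the method of types as in \cite{DA14}. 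The intermediate allocation model is not cosmetic: the simplicity constraint on the graph (no multiple edges or loops) makes the type probabilities of the graph model itself intractable, and the coupling exists precisely to absorb that discrepancy before any counting is done.

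The concrete gap in your version is the claim of a uniform $e^{o(n)}$ Radon--Nikodym derivative between $\prob_{(\sigma_n,\pi_n)}$ and the i.i.d.\ reference. The conditioned law is singular with respect to the reference off the set of achievable types (exact colour marginals $\sigma_n$, exact edge masses $m_n(a,b)$ of \eqref{nabdef}), so the derivative vanishes there; your observation that such configurations have zero conditional probability gives the upper bound for free but does nothing for the lower bound. To transfer Sanov's lower bound you would still need (a) that every consistent $\omega$ with $\omega_{1,1}=\omega_{1,2}=\sigma$ is approximable by achievable types $\nu_n$, and (b) that each such type already carries mass $e^{-n(H(\nu_n\|\skrik\otimes\skrik)+o(1))}$ under the conditioned law --- and (b) is exactly the content of the paper's method-of-types lemma, not a consequence of Sanov plus a change of measure. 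Once you establish (b) you no longer need Sanov at all. So either adopt the paper's two-step route, or be explicit that your ``$e^{o(n)}$ derivative'' is really a two-sided type-probability estimate on the support together with a density argument for the achievable types; as written, the lower half of the LDP is unsupported.
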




We denote, for any bin
$v\in\{1,\ldots, n\}$, by $(\tilde{X}(v),\tilde{Y}(v))$ its colours, and for  $h=x,y$ by $l^v(b_h)$ the
number of balls of colour $b_h\in\skrix$ it contains. Now define the
\emph{empirical process- level  occupancy measure} of the constellation by
$$\tilde{\skril}_n^{+}(a_x,a_y,\, \ell_{x,y})
= \frac 1n \sum_{v\in V} \delta_{(\tilde{X}(v),\tilde{Y}(v), \tilde{L}_{X,Y}(v))}((a_x,a_y),\, \ell_{x,y}), \qquad \mbox{ for } (a_x,a_y,\, \ell_{x,y})
\in\skrix^2\times\skrin^2(\skrix),$$
where $\tilde{L}_{X,Y}(v)=(l^v(b_x),\,l^v(b_y), (b_x,b_y)\in\skrix\times\skrix)$ is the colour distribution
in bin $v$. In our first theorem we establish exponential
equivalence of the law of the empirical process-level
 measure $\tilde{\skril}_n$ under  $\prob_{(\sigma_n,\varpi_n)}$  the law of the  coloured   random graph conditioned to have colour
law $\sigma_n$  and edge distribution $\pi_n$.  and  the law of the empirical  process-level occupancy
measure~$\tilde{\skril}_n^{+}$ under the random allocation model
$\tilde{\prob}_{(\sigma_n,\pi_n)}$. Recall the
definition of exponential equivalence, see
\cite[Definition~4.2.10]{DZ98}.

\begin{lemma}\label{randomg.expequivalnce}
The law of $\tilde{\skril}_n^{+}$ under $\tilde{\prob}_{(\sigma_n,\pi_n)}$
is exponentially equivalent to the law of $\tilde{\skril}_n$  under
$\prob_{(\sigma_n,\pi_n)}.$
\end{lemma}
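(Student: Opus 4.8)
The plan is to place both laws on a single probability space by the sampling-without-replacement versus random-allocation coupling of Boucheron et al. \cite{BGL02}, and to verify that the two empirical measures coincide outside an event whose probability decays faster than any exponential; by \cite[Definition~4.2.10]{DZ98} this is precisely exponential equivalence.

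First I would fix the vertex colours once and for all, using them in both models. Under the allocation law $\tilde\prob_{(\sigma_n,\pi_n)}$, for each unordered colour pair $\{a,b\}$ I throw $m_n(a,b)$ balls independently and uniformly into the slots indexing the possible edges joining the colour-$a$ and colour-$b$ vertices, with $m_n(a,b)$ as in \eqref{nabdef}; under $\prob_{(\sigma_n,\pi_n)}$ the same $m_n(a,b)$ edges are drawn without replacement from that slot pool. The coupling lets the two configurations agree whenever the independent throws produce no repeated slot, i.e. a simple configuration; on this collision-free event every vertex receives the same colour-degree vector $\tilde L_{X,Y}(v)$ in both models, so $\tilde\skril_n^{+}=\tilde\skril_n$ identically.

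Next I would quantify the discrepancy by the number of collisions. Let $C_n$ denote the number of excess balls, i.e. balls that land in an already-occupied slot. Reassigning each excess ball to an empty slot turns the allocation into a simple configuration and alters $\tilde L_{X,Y}(v)$ for at most two vertices per reassignment; since each vertex carries mass $1/n$ in the empirical measure, any metric $\dist$ compatible with the weak topology on $\skrix^2\times\skrin^2(\skrix)$ obeys $\dist(\tilde\skril_n,\tilde\skril_n^{+})\le c\,C_n/n$ for an absolute constant $c$. Hence for every $\delta>0$ it suffices to prove
\begin{equation*}
\limsup_{n\to\infty}\frac1n\log\tilde\prob_{(\sigma_n,\pi_n)}\big(C_n>\delta n\big)=-\infty .
\end{equation*}

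The heart of the matter, and the step I expect to be the main obstacle, is this super-exponential collision bound. Between any two colour classes there are $\Theta(n^2)$ admissible slots while only $m_n(a,b)=O(n)$ balls are thrown, so each ball lands in an already-occupied slot with probability $O(1/n)$; summing over the $O(n)$ balls and the finitely many colour pairs gives $\me[C_n]=O(1)$. Writing $C_n$ as a sum of collision indicators and applying a Chernoff/moment estimate for this bounded-mean count yields a tail of order $\big(\me[C_n]\,e/(\delta n)\big)^{\delta n}$, i.e. $\tilde\prob(C_n>\delta n)\le e^{-\delta n\log n\,(1+o(1))}$, which forces the $\limsup$ above to be $-\infty$. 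Combining this with the distance bound gives $\tilde\prob_{(\sigma_n,\pi_n)}\big(\dist(\tilde\skril_n,\tilde\skril_n^{+})>\delta\big)\to 0$ super-exponentially, establishing the claimed exponential equivalence. The delicate point is that the collision indicators are only weakly (indeed negatively) dependent and that the boundary constraints $\omega_{n,1}=\omega_{n,2}=\sigma_n$ must be preserved under the repair, so the collision count has to be controlled uniformly over all colour pairs rather than one pair at a time.
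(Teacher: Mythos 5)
Your proposal follows essentially the same route as the paper's proof: the same sampling-without-replacement versus random-allocation coupling in the spirit of Boucheron et al., the same reduction bounding the total-variation distance between the two empirical measures by $c/n$ times a collision/disparity count whose expectation stays $O(1)$, and a super-exponential tail bound for that count exceeding $\delta n$ (you invoke a Poisson-type Chernoff bound where the paper uses Bennett's inequality, but both exploit the bounded mean and yield a tail of order $e^{-\delta n\log n}$). The argument is correct and matches the paper's approach.
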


\begin{Proof}

Define the metric $d$ of total variation by
$$d(\nu,\tilde{\nu})=\sfrac{1}{2}\sum_{\big((a_x,a_y),(l_{x},l_{y})\big)\in\skrix^2\times\skrin^2(\skrix)}
|\nu\big((a_x,a_y),(l_{x},l_{y})\big)-\tilde{\nu}\big((a_x,a_y),(l_{x},l_{y})\big)|, \quad \mbox{ for
}\nu,\tilde{\nu}\in\skrim(\skrix^2\times\skrin^2(\skrix)).$$ As this
metric generates the weak topology, the proof of
Lemma~\ref{randomg.expequivalnce} is equivalent to showing that for
every $\eps>0,$
\begin{equation}\label{randomg.totalv}
\lim_{n\rightarrow\infty}\sfrac{1}{n}\log\prob\big\{d(\tilde{\skril}_n^{+} \,,\,\tilde{ \skril}_n )\ge\eps\big\}=-\infty,
\end{equation}
where $\prob$ indicates a suitable coupling between the random
allocation model and the coloured graph.

To  begin,  denote by $V(a)$ the collection of vertices (bins) which
have colour $a\in\skrix$ and observe that $$\sharp
V(a)=n\omega_n(a).$$

For $ h=x,y$  and every $a_h,b_h\in\skrix$, begin: At each step $k=1,\ldots,
m_n(a_h,b_h),$ we randomly pick two vertices $V^k_1\in V(a_h)$ and
$V^k_2\in V(b_h)$. Drop one  ball  of  colour $b_h$  in  bin  $V^k_1$
and one  ball of colour $a_h$ in $V^k_2,$  and  link $V^k_1$ to
$V^k_2$ by an edge unless $V^k_1=V^k_2$ or the two vertices are
already connected. If one of these two things happen, then we simply
choose an edge randomly from the set of all possible edges
connecting colours $a_h$ and $b_h$, which are not yet present in the
graph. This completes the construction of a graph with
$\Phi(\tilde{\skril}_{n,1})=\Phi(\tilde{\skril}_{n,2})=(\omega_n ,\,\pi_n)$ and
\begin{equation}\label{randomg.XTX}
d(\tilde{\skril}_n^{+} \,,\,\tilde{ \skril}_n )\le \sfrac{2}{n} \big(\sum_{a,b\in\skrix}B^{n}(a_x,b_x)+\sum_{a,b\in\skrix}B^{n}(a_y,b_y)\big)\, ,
\end{equation}
where $B^n(a,b)$ is the total number of steps $k\in\{1,\ldots,
m_n(a,b)\}$ at which there is disparity between the vertices
$V^k_1$, $V^k_2$ drawn and the vertices which formed the $k^{\rm
th}$ edge connecting $a$ and $b$ in the random graph construction.

Given $a,b\in\skrix$,the probability that $V^k_1=V^k_2$ or the two
vertices are already connected is equal to
$$p_{[k]}(a_h,b_h):=\sfrac{1}{m_n(a_h,b_h)}\1_{\{a_h=b_h\}}+\big(1-\sfrac{1}{m_n(a_h,b_h)}
\1_{\{a_h=b_h\}}\big)\sfrac{(k-1)}{(m_n(a_h,b_h))^2}.$$ $B^{n}(a_h,b_h)$ is a
sum of independent Bernoulli random variables
$X_1^{(h)},\,...,\,X_{n\varpi_n(a_h,b_h)/2}^{(h)}$ with `success' probabilities
equal to $p_{[1]}(a_h,b_h), \ldots, p_{[n\varpi_n(a_h,b_h)/2]}(a_h,b_h)$. Note
that $\me[X_k]= p_{[k]}(a_h,b_h)$  and
$$Var[X_k^{(h)}]=p_{[k]}(a_h,b_h)(1-p_{[k]}(a_h,b_h)).$$  Now,  we   have   $$\me B^{n}(a_h,b_h)= \sum_{k=1}^{n(a_h,b_h)} p_{[k]}(a_h,b_h)=\1_{\{a_h=b_h\}}
+\big(1-\1_{\{a_h=b_h\}}\sfrac{1}{m_n(a_h,b_h)}\big)\big(1-\sfrac{1}{m_n(a_h,b_h)}\big)\le
1+\1_{\{a_h=b_h\}}.$$

We  write
$$\sigma_n^2(a_h,b_h):=\sfrac{1}{m_n(a_h,b_h)}\sum_{k=1}^{m_n(a_h,b_h)}Var[X_k^{(h)}]$$
and  observe  that   $$\lim_{n\to \infty}\me(B^n(a_h,b_h))=\lim_{n\to
\infty}Var(B^n(a_h,b_h))=\lim_{n\to
\infty}m_n(a_h,b_h)\sigma_n^2(a,b)=\1_{\{a_h=b_h\}}+1.$$

We  Define  $e(t)=(1+t)\log(1+t)-t,$ for  $t\ge 0$   and use
Bennett's inequality, see \cite{Be62}, to  obtain, for  sufficiently
large $n$
$$\P\Big\{ \sfrac{1}{n}\sum_{h=x,y} B^{n}(a_h,b_h)\ge\sfrac{ \sum_{h=x,y}\1_{\{a_h=b_h\}}+1}{n}+\delta_{1}\Big\}
\le \exp\Big[-\sum_{h=x,y}m_n(a_h,b_h)\sigma_n^2(a_h,b_h)e(\sfrac{n\delta_{1}}{\sum_{h=x,y}m_n(a_h,b_h)\sigma_n^2(a_h,b_h)})\Big],$$
for any $\delta_1>0.$ Let $\eps\ge 0 $ and  choose
$\delta_1=\sfrac{\eps}{2m^2}.$ Suppose that we have
$B^n(a_h,b_h)\le\delta_1,$  for  $h=x,y$. Then, by~\eqref{randomg.XTX},
$$d(\tilde{\skril},{\nu}_n)\le 2\delta_1 m^2=\eps.$$ Hence,
$$\begin{aligned}
\prob\big\{ d(\tilde{\skril},\tilde{\skril}^{+})> \eps \big\}  \le \max_{h=x,y}\sum_{a_h,b_h\in\skrix}
\prob\big\{ B^n(a_h,b_h)&\ge n\delta_1 \big\}\\
&\le
m^2\max_{h=x,y}\sup_{a_h,b_h\in\skrix}\prob\big\{ B^n(a_h,b_h)\ge \1_{\{a_h=b_h\}}+1+
(n\delta_1)/2
\big\}\\
 & \le m^2\max_{h=x,y}\sup_{a,b\in\skrix} exp\Big[-m_n(a_h,b_h)\sigma_n^2(a_h,b_h)e(\sfrac{n\delta_{1}}{m_n(a_h,b_h)\sigma_n^2(a_h,b_h)})\Big] .
\end{aligned}$$

Let   $0\le \delta_2\le 1$. The,  for  sufficiently  large $n$ we
 have
\begin{equation}\begin{aligned}\label{Equ.coupling}
\frac{1}{n}&
\log\P\Big\{d(\tilde{\skril},\tilde{\skril}^{+}) > \eps \Big\}\le-(1-\delta_{2})e(\sfrac{n\delta_1}{2(1+\delta_{2})})\\
&=-(\1_{\{a=b\}}+1-\delta_{2})\Big[(\sfrac{1}{n}+\sfrac{\delta_1}{2(\1_{\{a=b\}}+1+\delta_{2})})\log(1+\sfrac{n\delta_1}{2(\1_{\{a=b\}}+1+\delta_{2})})-\sfrac{\delta_1}{2(\1_{\{a=b\}}+1+\delta_{2})}\Big].
\end{aligned}\end{equation}

This completes the proof of the lemma.
\end{Proof}


\section{Proof  of  Theorem~\ref{AEP1}, \ref{AEP2} and~\ref{AEP4}}

\subsection{Proof  of  Theorem~\ref{AEP4}.}
We  write  $\vartheta_{2}^{(n)}:=\vartheta_{2}^{(n)}(\varpi_n,\nu_n)$,  $\vartheta_{1}^{(n)}:=\vartheta_{1}^{(n)}(\varpi_n,\nu_n)$  and  state  the  following  Lemmma.
Denote  by  $\Sigma^{(n)}(\sigma_n,\pi_n)$  the  space  of  all  empirical neighbourhood  measures  with   
empirical  colour  measure $\sigma_n$  and  empirical  pair  measure  $\pi_n.$

\begin{lemma}[Doku-Amponsah, 2014]\label{randomg.LDprobm}
For any   process level empirical  measure,   $\nu_n$  with   $\nu_{n,1},\nu_{n,2}\in\Sigma^{(n)}(\sigma_n,\pi_n),$
\begin{equation}\label{AEP71}
\begin{aligned}
e^{-n(H(\nu_{n,1}\,\|\,\skrik_{(\sigma_n,\pi_n)})+H(\nu_{n,2}\,\|\,\skrik_{(\sigma_n,\pi_n)})+\vartheta_{1}^{(n)}} &\le\tilde{\prob}_{(\sigma_n,\pi_n)}(\tilde{\skril}_n^{+}=\nu_n)\\
&\le|\Sigma^{(n)}(\sigma_n,\pi_n)|^{-2}
e^{-n(H(\nu_{n,1}\,\|\,\skrik_{(\sigma_n,\pi_n)})+H(\nu_{n,2}\,\|\,\skrik_{(\sigma_n,\pi_n)})+\vartheta_{2}^{(n)}},
\end{aligned}
\end{equation}
where    $\skrik_{(\sigma_n,\pi_n)}(a_h,l_h)=\sigma_n(a_h)\skrik_{\pi_n}\{l_h \,|\,a_h\}$  and
$$\skrik_{\pi_n}\{l_h \,|\,a_h\}=\prod_{b_h\in\skrix}\frac{e^{-\pi_n(a_h,b_h)/\sigma_n(a_h)}[\pi_n(a_h,b_h)/\sigma_n(a_h)]^{\ell(b_h)}}{\ell(b_h)!},\,\mbox{for
$\ell_h\in\skrin(\skrix)$ and \,  $h=x,y.$ } $$

$$\lim_{n\to \infty}\vartheta_{2}^{(n)}=\lim_{n\to \infty}\vartheta_{1}^{(n)}=0.$$

\end{lemma}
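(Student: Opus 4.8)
The plan is to evaluate the probability on the left of \eqref{AEP71} exactly by the \emph{method of types} and then to estimate the resulting multinomial coefficients with Stirling's formula. The structural fact I would exploit first is that $X$ and $Y$ are independent, so that under $\tilde{\prob}_{(\sigma_n,\pi_n)}$ the ball allocation generating the $X$-occupancy and the one generating the $Y$-occupancy are two independent copies of a single-coordinate random allocation model, coupled only through the prescribed colour pairing $\omega_n$. Writing $\nu_{n,h}$ ($h=x,y$, i.e. the $\nu_{n,1},\nu_{n,2}$ of the statement) for the two $\skrix\times\skrin(\skrix)$-marginals of $\nu_n$, the event $\{\tilde{\skril}_n^{+}=\nu_n\}$ splits into the two marginal occupancy events, and
$$\tilde{\prob}_{(\sigma_n,\pi_n)}\big(\tilde{\skril}_n^{+}=\nu_n\big)=\prod_{h=x,y}\tilde{\prob}_{(\sigma_n,\pi_n)}\big(\tilde{\skril}_{n,h}^{+}=\nu_{n,h}\big).$$
It then suffices to establish, for a single coordinate, the one-coordinate estimate with exponent $H(\nu_{n,h}\,\|\,\skrik_{(\sigma_n,\pi_n)})$, a subexponential factor $|\Sigma^{(n)}(\sigma_n,\pi_n)|^{-1}$, and a vanishing correction; multiplying the two coordinates reproduces \eqref{AEP71} together with the factor $|\Sigma^{(n)}(\sigma_n,\pi_n)|^{-2}$.

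For the single-coordinate estimate I would fix $h$ and condition on the colour assignment, which partitions the vertices into the classes $V(a)$ with $\sharp V(a)=n\sigma_n(a)$. By the description around \eqref{nabdef}, for each neighbour colour $b$ exactly $m_n(a,b)$ balls of colour $b$ are dropped uniformly, and in the allocation model these drops are independent across $b$ and across classes. The probability that the bins of colour $a$ carry the prescribed occupancy profile is therefore a ratio of multinomial coefficients---how many of the $n\sigma_n(a)$ bins display each ball-vector $\ell$, against the total number of ways to distribute the $m_n(a,b)$ balls---and Stirling's formula $\log k!=k\log k-k+O(\log k)$ turns its normalised logarithm into $\sigma_n(a)\,H\big(\nu_{n,h}(\cdot\mid a)\,\|\,\skrik_{\pi_n}\{\cdot\mid a\}\big)$ up to an $O(n^{-1}\log n)$ remainder. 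The Poisson reference $\skrik_{\pi_n}\{\cdot\mid a\}$ appears precisely because spreading $m_n(a,b)\sim n\pi_n(a,b)$ balls uniformly over $n\sigma_n(a)$ bins gives each bin a count governed, in the Stirling regime, by the Poisson mean $\pi_n(a,b)/\sigma_n(a)$. Summing the contributions over $a\in\skrix$ and incorporating the colour weights $\sigma_n(a)$ reconstitutes exactly $H(\nu_{n,h}\,\|\,\skrik_{(\sigma_n,\pi_n)})$.

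To convert these asymptotics into the genuine two-sided inequality \eqref{AEP71} I would keep the Stirling estimates in their inequality form. The lower bound follows by bounding each factorial from the appropriate side, and the accumulated remainders define $\vartheta_1^{(n)}$. For the upper bound I would additionally use the standard type-counting step: since there are at most $|\Sigma^{(n)}(\sigma_n,\pi_n)|$ distinct single-coordinate occupancy measures compatible with $\sigma_n,\pi_n$, the probability of the prescribed type is at most $|\Sigma^{(n)}(\sigma_n,\pi_n)|^{-1}$ times the corresponding Stirling expression, the remainders now defining $\vartheta_2^{(n)}$, and the two coordinates contribute the squared factor. Finally I would check $\lim_n\vartheta_1^{(n)}=\lim_n\vartheta_2^{(n)}=0$: because $\skrix$ is finite and $\rho$ is bounded, the occupied ball-counts $\ell(b)$ stay in a bounded range, so the per-class Stirling remainder is $O(\log n)$ and, divided by $n$ and summed over the finitely many classes, tends to zero uniformly in $\nu_n$.

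The main obstacle I anticipate is making this last estimate uniform: controlling the discrepancy between the exact without-replacement (hypergeometric/multinomial) allocation probabilities and the idealised product-Poisson weights $\skrik_{\pi_n}\{\cdot\mid a\}$, uniformly over admissible $\nu_n$, and handling the diagonal pairs $a=b$, which carry the factor $\tfrac n2$ in \eqref{nabdef} and hence a slightly different normalisation. Keeping these corrections inside $\vartheta_1^{(n)},\vartheta_2^{(n)}$ and proving they vanish is exactly where the finiteness of $\skrix$ and the boundedness of $\rho$ are used; these hypotheses ensure the relevant counts are bounded and the Stirling remainders are $o(n)$, so that the polynomial type factor $|\Sigma^{(n)}(\sigma_n,\pi_n)|^{-2}$ and all lower-order terms are genuinely subexponential.
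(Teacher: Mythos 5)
Your proposal follows essentially the same route as the paper: the paper writes the allocation probability exactly as a product over $h=x,y$ of multinomial coefficients (its display \eqref{Tclass}) and then invokes Stirling's formula together with the combinatorial computations of \cite{DA14}, which is precisely your factorization--into--independent--coordinates plus method-of-types plus Stirling argument, with the Poisson kernel $\skrik_{\pi_n}\{\cdot\mid a\}$ emerging from the uniform spreading of $m_n(a,b)$ balls over $n\sigma_n(a)$ bins exactly as you describe. The one step I would flag is your derivation of the $|\Sigma^{(n)}(\sigma_n,\pi_n)|^{-2}$ prefactor from ``type counting'': counting the type classes yields a polynomial \emph{lower} bound on the probability of the dominant type, not an upper bound on an arbitrary prescribed type, so this subexponential factor should instead be extracted from the polynomial corrections in the Stirling estimates of the multinomial coefficients in \eqref{Tclass} --- harmless for the LDP since $\sfrac{1}{n}\log|\Sigma^{(n)}(\sigma_n,\pi_n)|\to 0$, but the deduction as you state it is not valid.
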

\begin{proof}
Note,  by construction   For any   process level empirical  measure,   $\nu_n$  with   $\nu_{n,1},\nu_{n,2}\in\Sigma^{(n)}(\sigma_n,\pi_n),$ we  have
\begin{align}
&\tilde{\prob}_{(\sigma_n,\pi_n)}(\tilde{\skril}_n^{+}=\nu_n)=\tilde{\prob}\big\{\tilde{\skril}_n^{+}=\nu_n\,\big|\,\Phi(\tilde{\skril}_{n,1}^{+})=\Phi(\tilde{\skril}_{n,2}^{+})=(\sigma_n,\pi_n)\big\}\\
&=\prod_{h=x,y}\prod_{a_h\in\skrix}\Big(\heap{n\sigma_n(a_h))}{n\nu_{n,u(h)}(a_h,\ell_h),\,\ell_h\in\skrin(\skrix)}\Big)\prod_{a_h,b_h\in\skrix}\Big(\heap{n\pi_n(a_h,b_h
)}{\ell_{a_h}^{(j)}(b_h),\,j=1,...,n\omega_n(a_h)}\Big)\Big(\frac{1}{n\sigma_n(a_h)}\Big)^{n\pi_n(a_h,b_h)},\label{Tclass}
\end{align}
while $\tilde{\prob}_{(\sigma_n,\pi_n)}(\tilde{\skril}_{n}^{+})=0$ when $\Phi(\tilde{\skril}_{n,1}^{+})\not=(\sigma_n,\pi_n)$  or
$\Phi(\tilde{\skril}_{n,2}^{+})\not=(\sigma_n,\pi_n)$ by convention. Therefore,  by  similar  combinatoric  computations  as   in the  proof  of   \cite[Lemma~0.6]{DA14}  and  the  Sterling's  formula  see,  \cite{Fe67}   we  have   \ref{AEP71}.
\end{proof}

The  proof  of   Theorem~\ref{AEP4}  follows  from  Lemma~\ref{randomg.LDprobm}   and  similar   arguments  as  \cite[Page~13]{DA14}.

\subsection{Proof  of  Theorem~\ref{AEP2}.}
Let  $\Gamma\in\skrim[(\skrix \times \skrin(\skrix))^2]$  and  write  $ \Gamma_{\phi}=\big\{ \omega\otimes\phi^{-1}:\, \omega\in \Gamma\big\}.$
Note  that  if  $A$  is  closed (open)  then  $\Gamma_{\phi}$  is  closed (open) since  $\phi$  is  linear.  Now  suppose    $F$  is  closed  subset  of
$\skrim[(\skrix \times \skrin(\skrix))^2]$   then   by  Theorem~\ref{AEP4} we  have
$$\begin{aligned}
-\inf_{\omega\in F} I_2(\omega\otimes \phi^{-1})=-\inf_{\nu\in F_{\phi}}I_2(\nu)&\le\liminf_{n\to\infty}\sfrac{1}{n}\log\P\big \{\tilde{\skril}_n\in F_{\phi}\big\}\\
&\le \lim_{n\to\infty}\sfrac{1}{n}\log\P\big\{ \skril_n\in F\big\}\le\limsup_{n\to\infty}\sfrac{1}{n}\log\P\big \{\tilde{\skril}_n\in F_{\phi}\big\}\\
&\le -\inf_{\nu\in F_{\phi}}I_2(\nu)=-\inf_{\omega\in F} I_2(\omega\otimes \phi^{-1}).
\end{aligned}$$
We obtain  the  form  of  the  rate  function  in  Theorem~\ref{AEP2}  if  we  solve  the  optimization  problem
$$\inf\Big\{I_2(\nu):\,\omega\otimes \phi^{-1}=\nu\Big\}=I_1(\omega).$$
\subsection{Proof  of  Theorem~~\ref{AEP1}}

We  write  $\skrim:=\skrim[(\skrix \times \skrin(\skrix))^2]$  and  define  the  set  $\skric^{\eps}$  by
$$\displaystyle\skric^{\eps}(\sigma,\,\pi)=\Big\{ \nu\in\skrim\colon
\sup_{\beta_x,\beta_y\in\skrix\times\skrin(\skrix)} |\nu(\beta_x,\,\beta_y) - \skrik_{(\sigma,\pi)}\otimes\skrik_{(\sigma,\pi)}(\beta_x,\,\beta_y)| \ge \eps\Big\}.$$

\begin{lemma}\label{WLLN}  Suppose  the  sequence of  measures  $(\sigma_n,\pi_n)$  converges to  the  pair  of  measures  $(\sigma,\pi).$
For any $\eps>0$ we have  $\lim_{n\to\infty} \P_{(\sigma_n,\pi_n)}\big(\skric^{\eps}\big)=0.$

 \end{lemma}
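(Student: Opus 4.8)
The plan is to read Lemma~\ref{WLLN} as a weak law of large numbers and to extract it from the large-deviation upper bound of Theorem~\ref{AEP2}, using that the good rate function $I_1$ has a unique zero at the reference product measure $\skrik_{(\sigma,\pi)}\otimes\skrik_{(\sigma,\pi)}$ (here $\P_{(\sigma_n,\pi_n)}(\skric^{\eps})$ abbreviates $\P_{(\sigma_n,\pi_n)}(\skril_n\in\skric^{\eps})$, which is exactly the object governed by that LDP). First I would verify this uniqueness. Summing the Poisson weights in the definition of $\skrik_{(\sigma,\pi)}$ shows that its $\skrix$-marginal is exactly $\sigma$, so the product measure $\skrik_{(\sigma,\pi)}\otimes\skrik_{(\sigma,\pi)}$ is consistent and satisfies $\nu_{1,1}=\nu_{1,2}=\sigma$; hence it lies in the effective domain of $I_1$ and $I_1\big(\skrik_{(\sigma,\pi)}\otimes\skrik_{(\sigma,\pi)}\big)=H\big(\skrik_{(\sigma,\pi)}\otimes\skrik_{(\sigma,\pi)}\,\big\|\,\skrik_{(\sigma,\pi)}\otimes\skrik_{(\sigma,\pi)}\big)=0$. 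Since relative entropy is strictly positive off the diagonal, $I_1$ vanishes only at $\skrik_{(\sigma,\pi)}\otimes\skrik_{(\sigma,\pi)}$.

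The technical core is to show that $\skric^{\eps}$ is closed in the topology of weak convergence and stays away from this minimiser. Because $\skrix$ is finite and $\skrin(\skrix)$ carries the discrete topology, the underlying space $(\skrix\times\skrin(\skrix))^2$ is countable and discrete, so every indicator of a point is bounded and continuous; testing weak convergence against such indicators shows that $\nu_k\to\nu$ weakly forces the pointwise convergence $\nu_k(\beta)\to\nu(\beta)$, and Scheff\'e's lemma then upgrades this to convergence in total variation. Consequently $\sup_{\beta}|\nu_k(\beta)-\nu(\beta)|\to 0$, so the sup-distance to the fixed measure $\skrik_{(\sigma,\pi)}\otimes\skrik_{(\sigma,\pi)}$ is weakly continuous; in particular $\skric^{\eps}$ is weakly closed and, since it excludes $\skrik_{(\sigma,\pi)}\otimes\skrik_{(\sigma,\pi)}$, it avoids the unique zero of $I_1$.

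With these two facts, goodness of $I_1$ closes the argument: its sublevel sets are compact and $I_1$ is lower semicontinuous, so a closed set avoiding the unique zero has strictly positive $I_1$-infimum, i.e. $\inf_{\nu\in\skric^{\eps}}I_1(\nu)>0$. Applying the upper bound in Theorem~\ref{AEP2} to the closed set $\skric^{\eps}$ — the LDP being governed by the rate function $I_1$ attached to the limit $(\sigma,\pi)$ of $(\sigma_n,\pi_n)$ — yields
$$\limsup_{n\to\infty}\frac1n\log\P_{(\sigma_n,\pi_n)}\big(\skric^{\eps}\big)\le-\inf_{\nu\in\skric^{\eps}}I_1(\nu)<0,$$
so that $\P_{(\sigma_n,\pi_n)}(\skric^{\eps})\to 0$, indeed exponentially fast. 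I expect the main obstacle to be the topological step of the second paragraph: the sup-distance defining $\skric^{\eps}$ is a priori finer than weak convergence, and one must use that on the countable discrete phase space weak convergence of probability measures is in fact equivalent (via Scheff\'e) to total variation convergence, which is precisely what makes $\skric^{\eps}$ weakly closed and bounded away from the minimiser.
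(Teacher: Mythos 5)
Your proposal is correct and follows essentially the same route as the paper: apply the LDP upper bound of Theorem~\ref{AEP2} to the closed set $\skric^{\eps}$, then use goodness and lower semicontinuity of $I_1$ together with the fact that its unique zero $\skrik_{(\sigma,\pi)}\otimes\skrik_{(\sigma,\pi)}$ lies outside $\skric^{\eps}$ to conclude $\inf_{\nu\in\skric^{\eps}}I_1(\nu)>0$. The only cosmetic difference is that the paper establishes the positivity of the infimum by contradiction (a minimising sequence would have a limit point with $I_1=0$ in $\skric^{\eps}$), whereas you argue it directly and also supply the closedness and uniqueness details the paper leaves implicit.
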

\begin{proof}

Observe  that  $\skric^{\eps}$  defined  above  is  a closed  subset  of  $\skrim$  and  so  by  Theorem~\ref{AEP2}  we  have  that

\begin{equation}\label{AEP7}
\limsup_{n\to\infty}\frac{1}{n}\log\P_{(\sigma_n,\pi_n)}\big(\skric^{\eps}\big)\leq  -\inf_{\nu\in\skric^{\eps}}I_1(\nu).
\end{equation}

We use  proof  by contradiction  to  show that the right hand side of \eqref{AEP7} is negative.Suppose that there exists sequence  $\nu_n$ in $\skric^{\eps}$ such that
$I_1(\nu_n)\downarrow 0.$ Then, there is a limit point $\nu\in F_1$ with $I(\nu)=0.$ Note $I$ is a good rate function and its level sets are compact,
and the mapping $\nu\mapsto I(\nu)$) lower semi-continuity. Now $I_1(\nu)=0$  implies $\nu(\beta_x,\,\beta_y)=\skrik_{(\sigma,\pi)}\otimes\skrik_{(\sigma,\pi)}(\beta_x,\,\beta_y),$  for  all   $\beta_x,\beta_y\in\skrix\times\skrin(\skrix) $ which contradicts  $\nu\in\skric^{\eps}$.

\end{proof}
(i)    Notice  $\displaystyle\rho^{(n)}(X,Y)=\langle\rho, \,\skril_n\rangle$  and  if  $\Gamma$  is open (closed)   subset  of  $\skrim$ then  $$ \Gamma_{\rho}:=\big\{ \nu: \langle\rho, \,\nu\rangle\in \Gamma\big\}$$ is  also open (closed) set  since  $\rho$  is  bounded function.

$$\begin{aligned}
-\inf_{z\in In(\Gamma)}I_{\rho}(z)&=-\inf_{\nu\in\ln(\Gamma_{\rho})}I_1(\nu)\\
&\le\liminf_{n\to\infty}\sfrac{1}{n}\log\P \Big\{\rho^{(n)}(X,Y)\in\Gamma\big |X=x,\,\Phi(\skril_{n,1})=\Phi(\skril_{n,2})=(\sigma_n,\pi_n)\Big\}\\
 &\le\lim_{n\to\infty}\sfrac{1}{n}\log\P \Big\{\rho^{(n)}(X,Y)\in\Gamma\big |X=x,\,\Phi(\skril_{n,1})=\Phi(\skril_{n,2})=(\sigma_n,\pi_n)\Big\}\\ 
 &\le\limsup_{n\to\infty}\sfrac{1}{n}\log\P\Big\{\rho^{(n)}(X,Y)\in\Gamma\big |X=x,\,\Phi(\skril_{n,1})=\Phi(\skril_{n,2})=(\sigma_n,\pi_n)\Big\}\\
 &\le -\inf_{\nu\in cl(\Gamma_{\rho})}I_1(\nu)=-\inf_{z\in cl(\Gamma)}I_{\rho}(z).
 \end{aligned}$$

(ii)  Observe  that  $\rho$ are  bounded, therefore  by  Varadhan's  Lemma and  convex duality, we  have
$$R( \P^x, \P^y, d)=\sup_{t\in\R}[td-\Lambda_{\infty}(t)]=\Lambda_{\infty}^{*}(d)$$
where
$$\Lambda_{\infty}^{*}(t):=\lim_{n\to\infty}\sfrac{1}{n}\log \int e^{nt\Big\langle\rho, \,\skril_n\Big\rangle}dQ_n(y)$$
exits  for  $\P$ almost  everywhere  $x.$  Using  bounded  convergence,  we  can  show  that  $$\Lambda_{\infty}(t)=\lim_{n\to\infty}\Lambda_n(t)
:=\lim_{n\to\infty}\sfrac{1}{n}\int \Big[\log \int e^{nt\Big\langle\rho, \,\skril_n\Big\rangle}dQ_n(y)\Big]dP_n(x).$$
Using   Lemma~\ref{AEP7},  by  boundedness of  $\rho$ we  have  that
$$\sfrac{1}{n}\Lambda(nt)=\frac{1}{n}\sum_{j=1}^{n}\log\me_{Q_n}\big(e^{t\rho(\skrib_x(j),\skrib_y(j)}\big)\to\langle \log \langle e^{t\rho(\skrib_X,\skrib_Y)},\skrik_{(\sigma,\pi)}\rangle,\skrik_{(\sigma,\pi)}\rangle=d_{av}(\sigma,\pi).$$
Also let  $$D_{min}^{(n)}:=\lim_{t\downarrow-\infty}\sfrac{\Lambda_n(t)}{t}$$
so  that  $\Lambda_n^{*}(d)=\infty$ for  $d< d_{min}^{(n)}$,  while   $\Lambda_n^{*}(D)<\infty$ for  $d>d_{min}^{(n)}.$  Observe  that  for  $n<\infty$  we  have   $D_{min}^{(n)}(d)=\me_{P_n}\big[\essinf_{Y\,\skrid\, Q_n}\rho^{(n)}(X,Y)\big],$  which  converges to  $d_{min}.$
 Using  similar  arguments  as  \cite[Proposition~2]{DK02}  we  obtain  $$ R_n(P_n,Q_n,d)=\sup_{t\in\R}\big(td-\Lambda_n(t)\big):=\Lambda_{n}^{*}(d)$$

Now  we   observe  from \cite[Page 41]{DK02}  that  the  converge  of  $\Lambda_{n}^{*}(\cdot)\to\Lambda_{\infty}(\cdot)$  is  uniform on  compact  subsets  of  $\R.$  Moreover, $\Lambda_{n}$    convex,  continuous  functions    converge  informally  to  $\Lambda_{\infty}$  and  hence  we  can invoke  \cite[Theorem 5]{Sce48}  to  obtain

 $$\Lambda_{n}^{*}(d)=\lim_{\delta\to 0}\limsup_{n\to\infty}\inf_{|\hat{d}-d|<\delta}\Lambda_{n}^{*}(\hat{d}).$$

  Using similar  arguments as \cite[Page 41]{DK02} in  the lines  after  equation  (64)  we  have  \eqref{AEP11}  which  completes  the  proof.

{\bf \Large Conflict  of  Interest}

The  author  declares  that  he has  no  conflict  of  interest.\\

{\bf \Large  Acknowledgement}

This  extension  has  been  mentioned  in the author's  PhD Thesis  at  University  of  Bath.




\begin{thebibliography}{WWW98}
\bibitem[BC13]{BC13}
{\sc C. Bordenave} and {\sc P. Caputo.}
{\newblock Large deviations
of empirical neighborhood distribution in sparse random graphs.}
{\newblock arxiv:1308.5725 (2013).}
\smallskip

\bibitem[Be62]{Be62}
{\sc G.~Bennett.}
\newblock {Probability Inequalities for the Sum of
Independent Random Variables}
\newblock {\emph{Journal of the American Statistical
Association 57 (297): 33–45. doi:10.2307/2282438 (1962)}}




\bibitem[BGL02]{BGL02}
 {\sc S. Boucheron,} {\sc F. Gamboa }and {\sc C. Leonard.}
 \newblock{ Bins and balls: Large deviations of the empirical occupancy process.}
\newblock{\emph{Ann. Appl. Probab. 12 607-636 (2002).}}
\smallskip

\bibitem[CT91]{CT91}
 {\sc T.M.~Cover} and {\sc J.A.~Thomas.}
 \newblock{Elements of Information Theory.}
 \newblock{Wiley Series in Telecommunications, (1991).}


\smallskip
\bibitem[DA06]{DA06}
{\sc K.~Doku-Amponsah.}
\newblock{Large deviations and  basic information theory for hierarchical and networked data structures.}
\newblock PhD Thesis, Bath (2006).
\smallskip

\bibitem[DA10]{DA10}
{\sc K.~Doku-Amponsah.}
\newblock{Asymptotic equipartition properties for hierarchical and networked  structures.}
\newblock ESAIM: PS 16 (2012): 114-138.DOI: 10.1051/ps/2010016.
\smallskip


\bibitem[DA14]{DA14}
{\sc K.~Doku-Amponsah.}
\newblock{Exponential Approximation, Method of types for Empirical Neighbourhood Measures of Random graphs by Random Allocation.}
\newblock Int. Journal of Statistics and Probability,Vol 3, No.2,110-120 (2014).
\smallskip

\bibitem[DA16]{DA16}
{\sc K.~Doku-Amponsah.}
\newblock{Large deviation Results for Critical Multitype
Galton-Watson trees.}
\newblock{\emph{https://arxiv.org/pdf/1009.3036.pdf}}
\smallskip

\bibitem[DK02]{DK02}
{\sc A.~Dembo} and {\sc I.~Kontoyiannis.}
\newblock Source Coding, Large deviations and Approximate Pattern.
\newblock Invited paper in IEEE Transaction on information Theory, 48(6):1590-1615,
\newblock June (2002).
\smallskip

\bibitem[DMS03]{DMS03}
{\sc A.~Dembo, P.~M\"orters} and {\sc S.~Sheffield.}
\newblock Large deviations of Markov chains indexed by random trees.
\newblock \emph{Ann. Inst. Henri Poincar\'e: Probab.et Stat.41,}
(2005) 971-996.
\smallskip

\bibitem[DZ98]{DZ98}
{\sc A.~Dembo} and {\sc O.~Zeitouni.}
\newblock Large deviations techniques and applications.
\newblock Springer, New York, (1998).
\smallskip

\bibitem[Fe67]{Fe67}
{\sc W.~Feller.}
\newblock{An introduction to probability theory and its applications.}
\newblock{Vol.~I, Wiley, New York. Third edition, (1967).}
\smallskip


 \bibitem[13]{New00}
 {\sc M.~E.~Newman.}
 \newblock Random graphs as models of networks.
 \newblock {\emph{http://arxiv.org/abs/cond-mat/0202208}}
\smallskip


\bibitem[Pe98]{Pe98}
{\sc D.B.~Penman.}
\newblock{Random graphs with correlation structure.}
\newblock PhD Thesis, Sheffield 1998.
\smallskip

\bibitem[Sce48]{Sce48}
{\sc C.E.~Shannon}.(1948)
\newblock{A Mathematical  Theory  of  Communication.}
\newblock {Bell  System Tech. J., 27:379-423,623-656.}
\smallskip

\end{thebibliography}
\end{document}